\newcommand{\lyxline}[1][1pt]{%
  \par\noindent%
  \rule[.5ex]{\linewidth}{#1}\par}
\providecommand{\tabularnewline}{\\}
\theoremstyle{plain}
\newtheorem{thm}{Theorem}
  \theoremstyle{plain}
  \newtheorem{prop}[thm]{Proposition}
  \theoremstyle{remark}
  \newtheorem*{rem*}{Remark}
\begin{document}

\title{Subtotal ordering -- a pedagogically advantageous algorithm for computing
total degree reverse lexicographic order}

\author{David R. Stoutemyer%
\thanks{dstout at hawaii dot edu%
}}
\maketitle
\begin{abstract}
Total degree reverse lexicographic order is currently generally regarded
as most often fastest for computing Gröbner bases. This article describes
an alternate less mysterious algorithm for computing this order using
exponent subtotals and describes why it should be very nearly the
same speed the traditional algorithm, all other things being equal.
However, experimental evidence suggests that subtotal order is actually
slightly faster for the \textsl{Mathematica}\textsuperscript{®} Gröbner
basis implementation more often than not. This is probably because
the weight vectors associated with the natural subtotal weight matrix
and with the usual total degree reverse lexicographic weight matrix
are different, and \textsl{Mathematica} also uses those the corresponding
weight vectors to help select successive S polynomials and divisor
polynomials: Those selection heuristics appear to work slightly better
more often with subtotal weight vectors.

However, the most important advantage of exponent subtotals is pedagogical.
It is easier to understand than the total degree reverse lexicographic
algorithm, and it is more evident why the resulting order is often
the fastest known order for computing Gröbner bases.
\end{abstract}
\textsl{Keywords}: Term order, Total degree reverse lexicographic,
tdeg, grevlex, Gröbner basis

\section{Introduction\label{sec:Introduction}}

Total degree reverse lexicographic order (degRevLex) is currently
generally regarded as most often fastest for computing Gröbner bases..%
\footnote{The adjective {}``graded'' is sometimes used instead of {}``total
degree''.%
} This order is usually determined by Algorithm 1. However, as indicated
in the comments therein, this algorithm tends to be mystifying and
therefore difficult to recall:

\begin{algorithm}[H]

\caption{degRevLex order of two exponent vectors}

\textbf{Given}: Nonnegative integer exponent vectors $\boldsymbol{\alpha}=\left[\alpha_{1},\alpha_{2},\ldots,\alpha_{n}\right]$,
$\boldsymbol{\beta}=\left[\beta_{1},\beta_{2},\ldots,\beta_{n}\right]$.

\textbf{Returns}: One of {}``$\prec$'', {}``='', or {}``$\succ$''
according to the degRevLex order between power products $z_{1}^{\alpha_{1}}z_{2}^{\alpha_{2}}\cdots z_{n}^{\alpha_{n}}$
and $z_{1}^{\beta_{1}}z_{2}^{\beta_{2}}\cdots z_{n}^{\beta_{n}}$
with indeterminate order $z_{1}\succ z_{2}\succ\cdots\succ z_{n}$.

\lyxline{\normalsize}

$a\leftarrow\alpha_{1}$;

$b\leftarrow\beta_{1}$;

for $k\leftarrow2$ to $n$ do

$\quad a\leftarrow a+\alpha_{k}$;

$\quad b\leftarrow b+\beta_{k}$; \ end for;

if $a<b$, then return {}``$\prec$'';

if $a>b$, then return {}``$\succ$''; \quad{}\quad{}\quad{}\qquad{}\qquad{}/{*}
: \textsl{So far this all makes sense}. {*}/

for $k\leftarrow n$ to 1 by $-1$ do

$\quad$if $\alpha_{k}>\beta_{k}$, then return {}``$\prec$'';
\quad{}\qquad{}\qquad{}/{*} : \textsl{Whoa! Why is} {}``>'' \textsl{matched
with} {}``$\prec$''?

$\quad$if $\alpha_{k}<\beta_{k}$, then return {}``$\succ$'';
\ end for; \quad{}/{*} : \textsl{These must be typographical errors!}''

return {}``='';\qquad{}\qquad{}\qquad{}\qquad{}\qquad{}\qquad{}\qquad{}/{*}
: \textsl{At least this step makes sense!} {*}/
\end{algorithm}

This is the order that Buchberger in his Ph.D. dissertation and the
order that Gröbner always used when discussing multivariate polynomials
(Buchberger, personal communication).%
\footnote{Trinks \cite{Trinks} made the extraordinarily useful contribution
of introducing the idea of admissible orderings and the particularly
useful alternative example of lexicographic ordering.%
}

The next fastest widely discussed order is total degree lexicographic
order, and it too orders primarily by total degree. Therefore clearly
total degree is very important for speed, and that makes sense because
if we are iteratively annihilating terms with the largest total degrees,
then the degree of each variable can't increase beyond that total
degree.

Consequently, it seems plausible that it would be more consistent
to break total-degree ties with the sum of the degrees of all but
the least main variable, then break those ties with the sum of the
degrees of all but the two least main variables, and so on, as described
in Algorithm 2.%
\begin{algorithm}[H]
\caption{subtotal order of two exponent vectors}

\textbf{Given}: Nonnegative integer exponent vectors $\boldsymbol{\alpha}=\left[\alpha_{1},\alpha_{2},\ldots,\alpha_{n}\right]$
, $\boldsymbol{\beta}=\left[\beta_{1},\beta_{2},\ldots,\beta_{n}\right]$.

\textbf{Returns}: One of {}``$\prec$'', {}``='', or {}``$\succ$''
according to the subtotal order between power products $z_{1}^{\alpha_{1}}z_{2}^{\alpha_{2}}\cdots z_{n}^{\alpha_{n}}$
and $z_{1}^{\beta_{1}}z_{2}^{\beta_{2}}\cdots z_{n}^{\beta_{n}}$
with indeterminate order $z_{1}\succ z_{2}\succ\cdots\succ z_{n}$.

\lyxline{\normalsize}

$A_{1}\leftarrow\alpha_{1}$;

$B_{1}\leftarrow\beta_{1}$;

for $k\leftarrow2$ to $n$ do

$\quad A_{k}\leftarrow A_{k-1}+\alpha_{k}$;

$\quad B_{k}\leftarrow B_{k-1}+\beta_{k}$; \ end for;

for $k\leftarrow n$ to 1 by $-1$ do

$\quad$if $A_{k}>\beta_{k}$ then return {}``$\succ$''; \quad{}\qquad{}\qquad{}/{*}
: \textsl{I have a better feeling about this!} {*}/

$\quad$if $A_{k}<\beta_{k}$ then return {}``$\prec$''; \ end
for;

return {}``='';
\end{algorithm}

Algorithms 1 and 2 both do $2n$ additions followed by up to $n$
comparisons, with the same looping costs.

Both algorithms assume that the variables have been extracted from
the power products, which requires that all power products have the
same number of exponents even if some of these exponents are 0. If
instead the power products have only non-zero exponents, and therefore
also contain variables to indicate the base for each exponent, then
the algorithms are slightly different but both of them still have
the same complexity as each other.

Section \ref{sec:Weight-matrices-and-vectors} discusses weight matrices,
weight vectors, and their implications for subtotal \emph{versus}
degRevLex order. Section \ref{sec:Experimental-procedures-and} describes
the experimental procedures for comparing these two orders and the
results of those comparisons, with conclusions in Section \ref{sec:Conclusions}.

\section{\label{sec:Weight-matrices-and-vectors}Weight matrices and weight
vectors}

When I first thought of subtotal order, I wondered if subtotal order
would be faster than degRevLex order. Therefore I wanted a quick way
to compare them experimentally without having to implement my own
Gröbner basis algorithm or learn the intricacies of an existing one
to modify it. Fortunately some computer algebra systems permit users
to specify their own ordering merely by providing a non-singular real
square \textsl{weight matrix} $W$: To compare exponent vectors $\boldsymbol{\alpha}$
and $\boldsymbol{\beta}$ for ordering, we lexicographically compare
corresponding \textsl{weight vectors}\[
\begin{array}{ccccc}
\boldsymbol{w}\left(\boldsymbol{\alpha}\right) & \leftarrow & \left[w_{1}\left(\boldsymbol{\alpha}\right),w_{2}\left(\boldsymbol{\alpha}\right),\ldots\right] & \leftarrow & W\cdot\boldsymbol{\alpha}^{T},\\
\boldsymbol{w}\left(\boldsymbol{\beta}\right) & \leftarrow & \left[w_{1}\left(\boldsymbol{\beta}\right),w_{2}\left(\boldsymbol{\beta}\right),\ldots\right] & \leftarrow & W\cdot\boldsymbol{\beta}^{T}.\end{array}\]

The weight matrix corresponding to algorithm 2 is\begin{equation}
W_{\mathrm{sub}}=\left(\begin{array}{ccccc}
1 & 1 & \ldots & 1 & 1\\
1 & 1 & \ldots & 1 & 0\\
\vdots & \vdots & \ddots & \vdots & \vdots\\
1 & 1 & \ldots & 0 & 0\\
1 & 0 & \ldots & 0 & 0\end{array}\right).\label{eq:Wsub}\end{equation}

\begin{prop}
The ordering specified by $W_{\mathrm{sub}}$ is an admissible ordering.\end{prop}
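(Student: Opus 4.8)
The plan is to verify directly the three defining properties of an admissible (term) order on the power products in $z_{1},\ldots,z_{n}$: that the relation is a total order, that it is compatible with multiplication (equivalently, invariant under adding a common exponent vector $\boldsymbol{\gamma}$ to both arguments), and that the empty power product is the least element, from which the well-ordering property follows. Since the order under consideration is by definition the lexicographic comparison of the weight vectors $W_{\mathrm{sub}}\boldsymbol{\alpha}^{T}$ and $W_{\mathrm{sub}}\boldsymbol{\beta}^{T}$, and lexicographic comparison is already a total, translation-invariant, linearly compatible order on $\mathbb{R}^{n}$, each property reduces to an easily checked feature of the single fixed matrix $W_{\mathrm{sub}}$.

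First I would establish that the order is total, which here amounts to antisymmetry: distinct exponent vectors must receive distinct weight vectors, i.e. the map $\boldsymbol{\alpha}\mapsto W_{\mathrm{sub}}\boldsymbol{\alpha}^{T}$ must be injective, equivalently $W_{\mathrm{sub}}$ must be non-singular. I would show $\det W_{\mathrm{sub}}=\pm 1$ by reversing the order of the columns, which turns $W_{\mathrm{sub}}$ into the upper-triangular matrix all of whose entries on and above the main diagonal equal $1$ (determinant $1$); the column reversal only multiplies the determinant by the sign of the reversal permutation, so the matrix is non-singular. Compatibility with multiplication is then immediate: since the map is linear and lexicographic order on $\mathbb{R}^{n}$ is invariant under adding a fixed vector to both sides, replacing $\boldsymbol{\alpha},\boldsymbol{\beta}$ by $\boldsymbol{\alpha}+\boldsymbol{\gamma},\boldsymbol{\beta}+\boldsymbol{\gamma}$ merely shifts both weight vectors by the same $W_{\mathrm{sub}}\boldsymbol{\gamma}^{T}$ and hence preserves their lexicographic comparison.

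Next I would note that the first row of $W_{\mathrm{sub}}$ consists entirely of $1$'s, so the leading weight coordinate $w_{1}\left(\boldsymbol{\alpha}\right)=\alpha_{1}+\cdots+\alpha_{n}$ is simply the total degree. For any nonzero $\boldsymbol{\alpha}\in\mathbb{Z}_{\geq 0}^{n}$ this exceeds the value $0$ attained at the zero vector, so already in the first coordinate the zero vector compares strictly below $\boldsymbol{\alpha}$; thus the empty power product is the unique least element. Combined with the compatibility just shown, this yields the well-ordering property by the standard Dickson's-lemma argument: a nonempty set of power products has only finitely many minimal elements under divisibility, and the least of those finitely many under the total order is the least element of the whole set.

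The only step that is not purely formal is the non-singularity of $W_{\mathrm{sub}}$, and even that is routine once the column reversal is spotted; the positivity requirement is trivial here because the total-degree row sits at the top. Indeed, the whole statement is an instance of the general criterion that a non-singular weight matrix defines a term order precisely when the first nonzero entry of every column is positive, a condition $W_{\mathrm{sub}}$ meets automatically since its entire first row equals $1$.
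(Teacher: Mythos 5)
Your proposal is correct, and its computational core coincides with the paper's: both arguments rest on exactly two facts about $W_{\mathrm{sub}}$, namely that its first row (hence the first entry of every column) is positive, and that it is non-singular because a permutation of its columns --- your reversal --- yields an upper triangular matrix with $1$'s on the diagonal. The difference is one of logical depth. The paper stops there, citing without proof the standard criterion that these two conditions suffice for admissibility. You instead prove that criterion from first principles: non-singularity gives injectivity of $\boldsymbol{\alpha}\mapsto W_{\mathrm{sub}}\boldsymbol{\alpha}^{T}$ and hence totality of the induced order, linearity plus translation-invariance of lexicographic comparison gives compatibility with multiplication, the all-ones first row makes the zero exponent vector the unique least element, and Dickson's lemma upgrades these to the well-ordering property. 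What the paper's version buys is brevity and reliance on a result any reader of the Gr\"obner-basis literature will accept; what yours buys is self-containedness and a clear view of which hypothesis is responsible for which axiom --- in particular it makes visible that positivity of the leading column entries is what forces $1$ to be minimal, a point the paper's appeal to the criterion leaves implicit. Both are complete proofs; yours would serve a reader who does not already know the weight-matrix criterion, which is arguably in the spirit of this paper's pedagogical aim.
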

\begin{proof}
The first element of each column in $W_{\mathrm{sub}}$ is positive.

Also, matrix $W_{\mathrm{sub}}$ is non-singular because the columns,
hence the variables, can be permuted into an upper triangular matrix
having 1 for every diagonal element, and the determinant of a square
upper triangular matrix is the product of its diagonal elements, which
consequently is non-zero, making the permuted $W_{\mathrm{sub}}$,
hence also $W_{\mathrm{sub}}$ non-singular. These are sufficient
conditions for an admissible ordering.
\end{proof}
Every admissible term order can be associated with a weight matrix,
and the one that is usually associated with algorithm 1 for degRevLex
ordering is\begin{equation}
W_{\mathrm{degRevLex}}=\left(\begin{array}{ccccc}
1 & 1 & \ldots & 1 & 1\\
0 & 0 & \ldots & 0 & -1\\
0 & 0 & \ldots & -1 & 0\\
\vdots & \vdots &  & \vdots & \vdots\\
0 & -1 & \ldots & 0 & 0\end{array}\right).\label{eq:Wgrevlev}\end{equation}

There are at least two ways to use weight matrices in an implementation:
\begin{enumerate}
\item Do a matrix-vector multiplication for a power product every time we
want to order it relative to another power product. To save some time,
we could interleave the matrix-vector multiplication with the comparison
of successive weight vector components to quit as soon as a difference
is detected.
\item Do matrix-vector multiplications only for the given \textsl{input}
polynomials, store the weight vectors in parallel with the corresponding
exponent vectors, then

\begin{enumerate}
\item whenever two power products are multiplied, compute the weight vector
of their product as the sum of the two weight vectors, and
\item whenever two power products are divided, compute the weight vector
of their quotient as the difference of the two weight vectors.
\end{enumerate}
\end{enumerate}
At the expense of some additional storage space, the second choice
is clearly much faster, because the number of power product comparisons
during computation of a Gröbner basis is usually far more than the
number of power products in the given polynomials.

\textsl{Mathematica} accepts weight matrices, and it uses the second
of these two methods -- even for the implemented built-in orderings,
for which it uses essentially Algorithm 1 rather than $W_{\mathrm{degRevLex}}$
to initialize the weight vectors in the case of degRevLex ordering.
Thus we can expect built-in degRevLex order for that implementation
to be slightly faster than providing $W_{\mathrm{degRevLex}}$, but
usually not dramatically so.

Nonetheless, to make the comparison fair -- more as if comparing built-in
degRevLex ordering to built-in subtotal ordering with initialization
via Algorithm 2 -- I started timing a few examples done with both
$W_{\mathrm{sub}}$ and $W_{\mathrm{degRevLex}}$. I soon noticed
that both matrices always gave the same Gröbner basis. This could
have been a coincidence. However:
\begin{prop}
Subtotal ordering is equivalent to degRevLex.\end{prop}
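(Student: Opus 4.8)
The plan is to compute, for each of the two weight matrices, the weight vector it assigns to an exponent vector, and then to show that lexicographic comparison of those vectors always gives the same verdict. Writing $A_k=\sum_{j=1}^{k}\alpha_j$ and $B_k=\sum_{j=1}^{k}\beta_j$ for the prefix sums of Algorithm 2, one reads off directly from \eqref{eq:Wsub} that $W_{\mathrm{sub}}\cdot\boldsymbol{\alpha}^{T}=\left[A_n,A_{n-1},\ldots,A_1\right]$, so that subtotal order is exactly lexicographic comparison of these prefix-sum vectors; and from \eqref{eq:Wgrevlev} that $W_{\mathrm{degRevLex}}\cdot\boldsymbol{\alpha}^{T}=\left[A_n,-\alpha_n,-\alpha_{n-1},\ldots,-\alpha_2\right]$. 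Both vectors share the leading entry $A_n$, the total degree, so whenever the total degrees differ the two orders agree at once; the whole content of the proposition lies in the tie-breaking case $A_n=B_n$.

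For that case I would argue directly. Set $d_j=\alpha_j-\beta_j$ and $S_k=A_k-B_k=\sum_{j\le k}d_j$, and let $m$ be the largest index with $d_m\ne 0$ (if there is none the vectors are equal and both orders return ``$=$''). Since $d_j=0$ for $j>m$ and $S_n=0$, telescoping gives $S_k=0$ for every $k\ge m$, while $S_{m-1}=S_m-d_m=-d_m$. Hence the highest index at which the prefix sums $A_k,B_k$ disagree is not $m$ but $m-1$, and there the sign of $A_{m-1}-B_{m-1}$ equals the sign of $-(\alpha_m-\beta_m)$. Reading off Algorithm 2, subtotal order therefore returns ``$\succ$'' exactly when $\alpha_m<\beta_m$ and ``$\prec$'' exactly when $\alpha_m>\beta_m$, which is precisely the reverse-lexicographic verdict produced by the second loop of Algorithm 1. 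This index shift together with the sign reversal is the crux of the argument, and it is exactly the phenomenon that makes the matched ``$>$''/``$\prec$'' pairing in Algorithm 1 look mysterious while the pairing in Algorithm 2 looks natural.

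A slicker, coordinate-free alternative is to exhibit a unit lower-triangular matrix $T$ (in fact lower bidiagonal, with $1$'s on the diagonal and $-1$'s just below it after the first row) for which $W_{\mathrm{degRevLex}}=T\,W_{\mathrm{sub}}$; the identities $-\alpha_{n-i+2}=A_{n-i+1}-A_{n-i+2}$ make each row of $W_{\mathrm{degRevLex}}$ the difference of two consecutive rows of $W_{\mathrm{sub}}$. One then invokes the standard fact that multiplying weight vectors on the left by a lower-triangular matrix with positive diagonal preserves their lexicographic comparison, so $W_{\mathrm{sub}}$ and $W_{\mathrm{degRevLex}}$ induce the same order. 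I expect the main obstacle in either route to be purely bookkeeping --- pinning down the off-by-one between the exponent index $m$ and the prefix-sum index $m-1$, and keeping the reversed sign straight --- rather than any genuine difficulty.
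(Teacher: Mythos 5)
Your proof is correct, and your primary argument takes a genuinely different route from the paper's. The paper's entire proof is your ``slicker alternative,'' run in the opposite direction: it observes that row $i$ of $W_{\mathrm{sub}}$ is the sum of rows $1$ through $i$ of $W_{\mathrm{degRevLex}}$ --- equivalently $W_{\mathrm{sub}}=L\,W_{\mathrm{degRevLex}}$ with $L$ the all-ones lower-triangular matrix, the inverse of your bidiagonal $T$ --- and then cites, without proof, that adding positive multiples of rows to rows below them preserves the induced ordering. Note that your direction of the factorization needs a slightly stronger imported fact, since your $T$ has negative subdiagonal entries, so the ``positive multiples'' condition does not literally cover it; what you invoke is that any lower-triangular multiplier with positive diagonal preserves lexicographic comparison. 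That fact is true and its verification is one line (if $u-v$ has its first nonzero entry at index $k$ and it is positive, then $T(u-v)$ vanishes before index $k$ and equals $T_{kk}(u_k-v_k)>0$ there), but since it carries the whole weight of the argument you should state the verification rather than call it standard. Your main prefix-sum argument, by contrast, is self-contained and imports nothing: it computes both weight vectors explicitly, settles the non-tie case at once, and tracks $S_k=A_k-B_k$ to locate the disagreement at index $m-1$ with reversed sign (observe that the tie $S_n=0$ forces $m\ge 2$, so the index $m-1$ genuinely exists). That costs more bookkeeping than the paper's three sentences, but it buys an explicit explanation of why ``$>$'' must pair with ``$\prec$'' in Algorithm 1 --- which is the pedagogical point of the whole paper --- whereas the paper's row-operation proof establishes equivalence without ever illuminating that mystery.
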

\begin{proof}
The first rows of $W_{\mathrm{sub}}$ and $W_{\mathrm{degRevLex}}$
are identical. The second row of $W_{\mathrm{sub}}$ given by formula
(\ref{eq:Wsub}) is the second row of $W_{\mathrm{degRevLex}}$ given
by formula (\ref{eq:Wgrevlev}) plus the first row of $W_{\mathrm{degRevLex}}$.
The third row of $W_{\mathrm{sub}}$ is the first row of $W_{\mathrm{degRevLex}}$
plus the second and third rows of $W_{\mathrm{degRevLex}}$, and so
on. Thus $W_{\mathrm{degRevLex}}$ can be transformed into $W_{\mathrm{sub}}$
by a sequence of adding positive multiples of rows to rows below them.
This is a sufficient condition for orderings specified by two weight
matrices to be equivalent.\end{proof}
\begin{rem*}
After noticing the identical Gröbner bases I discovered that Chee-Keng
Yap \cite{Yap} describes the computation of degRevLex ordering by
Algorithm 1, but lists $W_{\mathrm{sub}}$ rather than the usual $W_{\mathrm{degRevLex}}$
as a weight matrix for degRevLex ordering. He didn't explain why,
but I suspect that he had the idea of subtotals too.
\end{rem*}
With weight matrices $W_{\mathrm{degRevLex}}$ and $W_{\mathrm{sub}}$
inducing the same ordering, we should expect their speed ratios to
be close to 1.0.

Section \ref{sec:Experimental-procedures-and} indicates that this
is often so, but not always, then explains why there are exceptions
in the case of \textsl{Mathematica}.

\section{\label{sec:Experimental-procedures-and}Experimental procedures and
results}

The discussion in Sections \ref{sec:Introduction} and \ref{sec:Weight-matrices-and-vectors}
suggests that subtotal order should be very nearly the same speed
as degRevLex order -- regardless of whether they are both determined
directly from exponent vectors or both determined via weight matrices
and perhaps also weight vectors. This section describes experimental
procedures and results that test this hypothesis.

\subsection{Experimental procedures}

A \textsl{Mathematica} function invocation of the form\begin{multline*}
\mathtt{TimeConstrained\,}[\mathtt{Timing\,}[\\
\mathtt{GroebnerBasis\,}[\left\{ \mathit{polynomial}_{1},\mathit{polynomial}_{2},\ldots\right\} ,\,\left\{ \mathit{variable}_{1},\mathit{variable}_{2},\ldots\right\} ,\\
\mathtt{Sort}\rightarrow\mathtt{True},\mathtt{\: MonomialOrder}\rightarrow\mathtt{DegreeReverseLexicographic}];],\\
\mathit{maximumSeconds}]\end{multline*}
computes a degRevLex Gröbner basis of the polynomials after heuristically
reordering the variables for speed, then displays only the computing
time -- or displays $\mathtt{\$Aborted}$ if \emph{maximumSeconds}
is exceeded.

The rewrite rule\[
\mathtt{SubtotalWeightMatrix}\left[n\_\right]:=\mathtt{Table}\left[\mathtt{Table}\left[\mathtt{If}\left[i\leq n-j+1,\,1,\,0\right],\, j,\, n\right],\, i,\, n\right]\]
defines a function that returns an $n$ by $n$ weight matrix for
subtotal ordering.

The rewrite rule\begin{multline*}
\mathtt{DegRevLexWeightMatrix\,}[n\_]:=\\
\mathtt{Table\,}[\mathtt{Table\,}[\mathtt{If}\,[i==1,1,\mathtt{If\,}[i==n+2-j,-1,0]],\left\{ j,n\right\} ],\left\{ i,n\right\} ]\end{multline*}
defines a function that returns an $n$ by $n$ weight matrix for
subtotal ordering.

Therefore a\textsl{ Mathematica} function invocation of the form\begin{multline*}
\mathtt{TimeConstrained\,}[\mathtt{Timing\,}[\\
\mathtt{GroebnerBasis\,}[\left\{ polynomial_{1},polynomial_{2},\ldots\right\} ,\,\left\{ \mathit{variable}_{1},\ldots,\mathit{variable}_{n}\right\} ,\\
\mathtt{Sort}\rightarrow\mathtt{True},\mathtt{\: MonomialOrder}\rightarrow\mathtt{SubtotalWeightMatrix}[n]];],\\
\mathit{maximumSeconds}]\end{multline*}
computes a subtotal Gröbner basis of the polynomials after heuristically
reordering the variables for speed, then displays only the computing
time or $\mathtt{\$Aborted}$. A similar function invocation using
$\mathtt{DegRevLexWeightMatrix}$ displays the computation time of
a degRevLex basis using a weight matrix.

I didn't want to address the issue of inexact computation at this
time, because it is handled quite differently by different systems.%
\footnote{Many implementations make no special effort for Floats, making the
results disastrously sensitive to differences in floating point arithmetic
and differences in the order of operations.%
} Therefore I avoided test cases that contain Floats, unless they were
obviously representations of rational numbers having small magnitude
denominators, in which case I rationalized those Floats.

I wanted to detect any difference in the relative speeds of term-order
comparisons, and coefficient arithmetic tends to be a larger portion
of the total computing time when Gröbner bases are computed over the
rational numbers rather than over the integers modulo a prime whose
square fits in one computer word. Therefore, I did all problems in
the coefficient domain $\mathbb{Z}_{32003}$, because the prime 32003
maps only about 0.003\% of all non-zero integer coefficients to 0,
but 32003 is small enough so that its square fits within one 32-bit
computer word.

Some of the original test cases and the ones obtained by rationalizing
simple floating-point coefficients contained rational coefficients
that weren't integers. Such polynomials were multiplied by the least
common multiple of their coefficient denominators so that computing
a Gröbner basis in $\mathbb{Z}_{32003}$ was straightforward.

My patience for typing examples, checking for typographical errors,
and waiting for results is limited. Also, I wanted to avoid the extra
space of listing previously unpublished examples in two-dimensional
.pdf format, forcing others to do lengthy error-prone typing to try
all of these examples on some other Gröbner basis implementation.
Therefore I searched the Internet for medium-sized examples -- preferably
available in text that I could copy, paste and quickly edit to replace
semicolons with commas, etc; and I used 120 seconds as the time limit.
Many such examples are available at \cite{Verschelde}. Others are
available at \cite{Posso} and \cite{ginv}. These sites also give
original references for the examples. This article lists input polynomials
for a few additional examples that I think have not previously been
published.

Although the $\mathtt{Sort}\!\rightarrow\mathtt{\! True}$ parameter
should make the results rather insensitive to the order in $\left\{ \mathit{variable}_{1},\ldots,\mathit{variable}_{n}\right\} $,
I also entered the list of variables in the order specified in the
sources or that I could infer, in case it mattered. Some of the original
problems involved eliminating some of the variables or treating some
as parameters in the coefficient domain. However, for uniformity I
simply computed the Gröbner basis with respect to all of the variables.

My objective was to compare the speed of subtotal versus degRevLex
ordering algorithms. To make the comparison fair, I used a weight
matrix for degRevLex too. However, I also used the built-in degRevLex
option to estimate how much improvement to expect if the subtotal
ordering algorithm was built in.

The computer has a 1.60GHz Intel Core 2 Duo U9600 CPU with 3 gigabytes
of RAM.

The Windows Vista operating system appears to have a timer resolution
of only about 0.015 seconds. Therefore if a time was less than 1 second,
then I issued the command\begin{multline*}
\mathtt{Timing\,}[\mathtt{Do}\,[\\
\mathtt{GroebnerBasis\,}[\left\{ \mathit{polynomial}_{1},\mathit{polynomial}_{1},\ldots\right\} ,\left\{ \mathit{variable}_{1},\mathit{variable}_{1},\ldots\right\} ,\\
\mathtt{Sort}\rightarrow\mathtt{True},\:\mathtt{MonomialOrder}\rightarrow\ldots],\{m\}];],\end{multline*}
with the number of repetitions $m$ sufficient to make the time exceed
1 second, then divided by $m$ to obtain the time for one repetition.
Despite this, even with no network connection and no voluntary programs
launched other than\textsl{ Mathematica}, times tend to vary upon
repetition within an interval of about $\pm3\%$ of their mean. \textsl{Mathematica}
uses reference counts rather than garbage collection, so this variation
is probably caused instead by the multiple core architecture and the
irregular competing activity of the many resident programs that lurk
in most Windows installations.

\textsl{Relative} speed is most important for problems that require
extensive time. Therefore of all the problems that I tried, I included
the ones that took the most time for built-in degRevLex order without
exceeding the time limit for any ordering algorithm -- as many examples
as fit in a one-page table. The examples that I thus excluded didn't
have noticeably different overall behavior regarding the relative
speed of subtotal \emph{versus} degRevLex ordering algorithms.

\subsection{The test examples}

Here are the included examples that I believe aren't already publicly
published in some form:
\begin{enumerate}
\item Lichtblau 1:%
\footnote{The problems provided by Daniel Lichtblau are from a mix of literature,
user questions and bug reports, with unrecorded individual provenance.%
}\[
\begin{array}{c}
t^{4}zb+x^{3}ya,\\
tx^{8}yz-ab^{4}cde,\\
xy^{2}z^{2}d+zc^{2}e^{2},\\
tx^{2}y^{3}z^{4}+ab^{2}c^{3}e^{2},\end{array}\]
with variable order $\left\{ t,x,y,z,a,b,c,d,e\right\} $.
\item Lichtblau 2:\[
\begin{array}{c}
a^{2}+b^{2}+2c^{2}+2d^{2}+3f^{2}+3g^{2}-h,\\
70\, ab+140\, ac+140\, bd+28\, cd+252\, cf+252\, dg+18fg-105\, u,\\
28\, bc+28\, c^{2}+28\, ad+28\, d^{2}+42\, af+12\, df+24f^{2}+42\, bg+12\, cg+24\, g^{2}-35\, v,\\
36\, cd+30\, bf+24\, cf+30\, ag+24\, dg+16fg-35\, x,\\
8\, df+2f^{2}+8cg+2g^{2}-7y,\\
100fg-77\, z,\end{array}\]
with variable order $\left\{ a,b,c,d,f,g,h,u,v,x,y,z\right\} $.
\item Lichtblau 3:\[
\begin{array}{c}
-3375\, uv+3291\, u^{2}v+750\, uv^{2}-732\, u^{2}v^{2}+2225\, uv^{3}-2209\, u^{2}v^{3}+Qx,\\
P+Qy,\\
P+Qz,\\
r(1+24\, uv+24\, u^{2}v-24\, uv^{2}),\end{array}\]
where\begin{eqnarray*}
P\!\! & \!=\! & \!\!-400\, u\!+\!350\, u^{2}\!-\!4800\, uv\!+\!4800\, u^{2}v\!+\!7425\, uv^{2}\!-\!7359\, u^{2}v^{2}\!-\!2225\, uv^{3}\!+\!2209\, u^{2}v^{3},\\
Q\!\! & \!=\! & \!\!(1+24\, uv+24\, u^{2}v-24\, uv^{2}),\end{eqnarray*}
with variable order $\left\{ x,y,z,u,v,r\right\} $.
\item A geometry problem of Michael Trott:\[
\begin{array}{c}
-x_{1}+x_{2}+y_{1}-3x{}_{1}^{2}y_{1}+2x{}_{1}^{3}y_{1}-2x_{1}y{}_{1}^{3}+2x_{2}y{}_{1}^{3}-y_{2}+3x{}_{1}^{2}y_{2}-2x{}_{1}^{3}y_{2},\\
x_{1}-x_{2}-y_{1}+3x{}_{2}^{2}y_{1}-2x{}_{2}^{3}y_{1}+y_{2}-3x{}_{2}^{2}y_{2}+2x{}_{2}^{3}y_{2}+2x_{1}y{}_{2}^{3}-2x_{2}y{}_{2}^{3},\\
-1+2x_{1}-2x{}_{1}^{3}+x{}_{1}^{4}+2y_{1}+y{}_{1}^{4},\\
-1+2x_{2}-2x{}_{2}^{3}+x{}_{2}^{4}+2y_{2}+y{}_{2}^{4},\\
1+(x_{1}-x_{2})z+(y_{1}-y_{2})z^{2},\end{array}\]
with variable order $\left\{ x_{1},x_{2},y_{1},y_{2},z\right\} $.
\item \textsl{Mathematica} \textsf{Help\,-\,GröbnerBasis\,-\,Options\,-\,Sort}:\[
\begin{array}{c}
3x^{7}+5xyz^{2}-10\, y^{2}z-6xz+y^{3}+w,\\
-2x^{2}z+3x^{3}y^{2}+y^{4}-12\, xz-8xz^{2}+3y^{2}z-11\, wxy^{2},\\
10\, x^{2}w-7yzw^{2}-2xz^{4}w+4x^{2}y+3xy^{2}-6yz^{3}-w+2,\\
w^{3}-wx^{2}y+xyz^{2}-2wxz^{2}-3w-2xy^{2}-3,\end{array}\]
with variable order $\left\{ w,x,y,z\right\} $.
\item Variation on a theme of Giovini \emph{et}. \emph{al}: One of my test
files is so similar to Giovini 3.7 \cite{GioviniEtAl} that I must
have entered it twice, but once with typographical errors. These seemingly
minor changes make the example that is relatively nearly fastest for
subtotal ordering algorithm become an example that is relatively slowest
for that algorithm. The variation is\[
\begin{array}{c}
x^{33}z^{23}-y^{82}a,\\
x^{45}-y^{13}z^{21}b,\\
x^{41}c-y^{33}z^{12},\\
x^{22}-y^{33}z^{12}d,\\
x^{5}y^{17}z^{22}e-1,\\
xyzt-1,\end{array}\]
with variable order $\left\{ t,b,c,e,d,a,z,x,y\right\} $.
\end{enumerate}
All of the other examples are already published elsewhere, as cited
in Table 1.

\subsection{Test results}

The last column of Table 1 displays the most important results --
the ratios of the computing time for subtotal ordering versus degRevLex
ordering, both using a weight matrix.

Notice that the rows are ordered by non-decreasing values in this
last column in an attempt to discern correlations with the number
of variables and/or total degrees of the input polynomials. These
total degrees are also listed in the Table with, for example, $6^{3}\!\cdot\!5$
meaning 3 polynomials each having total degree 6 and 1 polynomial
having total degree 5. However, the number of examples and/or their
variability doesn't appear to be large enough to reveal obvious correlations.
This might be partly because the standard deviation of the number
of variables is only 2.8 for a median and mean of about 8.

Most of the speed ratios are rather close to 1.0, as expected. However,
there are some notable outliers at at the top and bottom of the table.
A probable explanation for these outliers is that in\textsl{ Mathematica}
the weight vectors are also used to select the next S-polynomial or
reducer polynomial. If so, then it is an indication that good selection
of a next S-polynomial and a next reducer polynomial should instead
if possible be based on the \textsl{ordering} that the weight vector
induces, so as to be invariant to a property that doesn't correlate
completely with ordering.

Although re-executing an example caused time variations for individual
examples within an interval of about $\pm3\%$, there are enough examples
so that the summary statistics are more tightly repeatable.

The median speed ratio of 0.98 and mean of 0.92 suggest that the subtotal
algorithm would be about this much faster than the degRevLex algorithm
if both were built into\textsl{ Mathematica}. The standard deviation
of 0.24 weakens that conclusion, but most of the standard deviation
comes from the outliers at the top of the table where the subtotal
algorithm was significantly faster.

The speeds of the two methods are close enough so that quite possibly
the degRevLex algorithm could be faster than the subtotal algorithm
if both were built into another Gröbner basis implementation that
used weight vectors in a different way to select S polynomials and
reducers. However, it seems highly likely that the speeds would be
quite close if weight vectors weren't used for selection strategy.

The penultimate column of Table 1 is next most interesting:
\begin{itemize}
\item The summary statistics for that column weakly support the conclusion
that built-in degRevLex is faster than the degRevLex weight matrix
by only 2\% for the median or 8\% for the mean. This can only be attributable
to computing the initial weight vectors from the weight matrices,
and these small percentages indicate that the initialization of weight
vectors is usually only a small portion of the total time.
\item However, there are a surprising number of instances where the weight
matrix is slightly \textsl{faster} -- up to 7\%. I can think of no
implementation reason why this should be. If there is no such reason,
then it can be taken as an indication of the repeatability deviations
in \textsl{individual} time ratios -- up to 7\% rather than the 3\%
that I estimated from informal experiments.
\end{itemize}
\begin{table}[h]
\centering{}\caption{Coefficient in $\mathbb{Z}_{32003}$. Seconds \& time ratios for subtotal
\emph{vs} degRevLex order}
\begin{tabular}{|c|c|c|c|c|c|}
\hline 
search term \& citations & ${\#\atop \mathrm{vars}}$ & input tot. degrees & $\underset{\mathrm{sec.}}{{\mathrm{grevlex}\atop \mathrm{builtin}}}$ & $\dfrac{\mathrm{grevlex}}{\left({\mathrm{grevlex}\atop \mathrm{matrix}}\right)}$ & $\dfrac{\mathrm{subtotal}}{\left({\mathrm{grevlex}\atop \mathrm{matrix}}\right)}$\tabularnewline
\hline
\hline 
Cohn3 \cite{ginv} & 4 & $6^{3}\!\cdot\!5$ & 7.24 & 0.08 & \textbf{0.08}\tabularnewline
\hline 
filter design \cite{Posso} & 9 & $5\!\cdot\!,4^{2}\!\cdot\!3\!\cdot\!,2^{4}$ & 10.1 & 0.98 & \textbf{0.22}\tabularnewline
\hline 
benchmark\_i1 \cite{ginv} & 10 & $3^{10}$ & 2.78 & 0.30 & \textbf{0.31}\tabularnewline
\hline 
Assur44 \cite{ginv} & 8 & $3^{3}\!\cdot\!2^{5}$ & 6.30 & 0.40 & \textbf{0.38}\tabularnewline
\hline 
Giovini 3.7 \cite{GioviniEtAl}  & 9 & $83\!\cdot\!45^{3}\!\cdot\!44\!\cdot\!4$ & 36.7 & 0.95 & \textbf{0.65}\tabularnewline
\hline 
benchmark\_D1 \cite{ginv} & 12 & $3^{2}\!\cdot\!2^{9}\!\cdot\!1$ & 0.71 & 0.59 & \textbf{0.73}\tabularnewline
\hline 
des22\_24 \cite{ginv,Verschelde} & 10 & $2^{8}\!\cdot\!1^{2}$ & 0.75 & 0.70 & \textbf{0.73}\tabularnewline
\hline 
Lichtblau 2 & 9 & $11\!\cdot\!10\!\cdot\!6^{2}$ & 0.44 & 0.72 & \textbf{0.90}\tabularnewline
\hline 
Gonnet \emph{et}. \emph{al}. \cite{GonnetEtAl} & 17 & $2^{19}$ & 6.01 & 0.74 & \textbf{0.95}\tabularnewline
\hline 
cdpm5 \cite{ginv} & 5 & $3^{5}$ & 4.60 & 0.95 & \textbf{0.95}\tabularnewline
\hline 
reimer5 \cite{Verschelde} & 5 & $6\!\cdot\!5\!\cdot\!4\!\cdot\!3\!\cdot\!2$ & 1.59 & 0.99 & \textbf{0.96}\tabularnewline
\hline 
Kotsireas4body \cite{Posso} & 6 & $5^{3}\!\cdot\!2^{3}$ & 2.92 & 1.05 & \textbf{0.97}\tabularnewline
\hline 
Lichtblau 3 & 12 & $2^{6}$ & 0.50 & 0.88 & \textbf{0.97}\tabularnewline
\hline 
cyclic6 \cite{ginv,Verschelde} & 6 & $6\!\cdot\!5\!\cdot\!4\!\cdot\!3\!\cdot\!2,\!\cdot\!1$ & 0.62 & 1.01 & \textbf{0.97}\tabularnewline
\hline 
Giovini 3.1 \cite{GioviniEtAl} & 7 & $4^{2}\!\cdot\!3^{10}\!\cdot\!2$ & 0.57 & 1.17 & \textbf{0.97}\tabularnewline
\hline 
eco8 \cite{ginv} & 8 & $3^{3}\!\cdot\!2\!\cdot\!1$ & 1.18 & 0.95 & \textbf{0.98}\tabularnewline
\hline 
redeco7 \cite{Verschelde} & 8 & $2^{6}\!\cdot\!1^{2}$ & 0.67 & 0.96 & \textbf{0.98}\tabularnewline
\hline 
extcyc5 \cite{ginv} & 6 & $5^{2}\!\cdot\!4\!\cdot\!3\!\cdot\!2\!\cdot\!1$ & 1.15 & 1.00 & \textbf{0.98}\tabularnewline
\hline 
f744 \cite{ginv} & 12 & $3^{2}\!\cdot\!2^{2}\!\cdot\!1^{2}$ & 5.13 & 0.93 & \textbf{0.98}\tabularnewline
\hline 
Lichtblau 1 & 6 & $5^{3}$ & 5.91 & 1.02 & \textbf{0.99}\tabularnewline
\hline 
virasoro \cite{Verschelde} & 8 & $2^{8}$ & 23.8 & 0.99 & \textbf{1.00}\tabularnewline
\hline 
Trott geometry & 5 & $4^{3}\!\cdot\!3$ & 12.4 & 0.99 & \textbf{1.00}\tabularnewline
\hline 
Kotsireaus4bodySymmetric \cite{Posso} & 7 & $5^{2}\!\cdot\!3^{2}\!\cdot\!2^{2}$ & 36.2 & 1.00 & \textbf{1.01}\tabularnewline
\hline 
Katsura7 \cite{Verschelde} & 7 & $2^{6}\!\cdot\!1$ & 0.78 & 0.99 & \textbf{1.01}\tabularnewline
\hline 
redcyc6 \cite{Verschelde} & 6 & $11\!\cdot\!5\!\cdot\!4\!\cdot\!3\!\cdot\!2\!\cdot\!1$ & 0.43 & 1.00 & \textbf{1.01}\tabularnewline
\hline 
Harrier RK2 \cite{BoegeGebauerAndKredel} & 13 & $4^{4}\!\cdot\!3^{2}\!\cdot\!2\!\cdot\!1^{4}$ & 33.5 & 0.95 & \textbf{1.03}\tabularnewline
\hline 
\textsl{Mathematica} help & 4 & $7\!\cdot\!6\!\cdot\!5\!\cdot\!4$ & 5.98 & 1.01 & \textbf{1.04}\tabularnewline
\hline 
rpb124 \cite{Verschelde} & 9 & $3^{2}\!\cdot\!2^{6}\!\cdot\!1$ & 1.87 & 1.07 & \textbf{1.04}\tabularnewline
\hline 
Tran, rational implicitization \cite{Tran} & 5 & $6^{2}\!\cdot\!5$ & 5.13 & 0.96 & \textbf{1.05}\tabularnewline
\hline 
kinema \cite{ginv} & 9 & $2^{6}\!\cdot\!1^{3}$ & 2.07 & 1.03 & \textbf{1.07}\tabularnewline
\hline 
Kotsireaus5body \cite{Posso,Verschelde} & 6 & $5^{3}\!\cdot\!2^{3}$ & 3.18 & 1.03 & \textbf{1.10}\tabularnewline
\hline 
rpbl \cite{Verschelde}  & 6 & $3^{5}\!\cdot\!2$ & 0.94 & 1.07 & \textbf{1.13}\tabularnewline
\hline 
variation on Giovini 3.7 & 9 & $83\!\cdot\!46\!\cdot\!45^{3}\!\cdot\!4$ & 19.8 & 0.94 & \textbf{1.39}\tabularnewline
\hline 
\textbf{Statistics$\downarrow$} &  &  &  &  & \tabularnewline
\hline 
\textbf{Median} & 8.0 &  & 2.85 & 0.98 & \textbf{0.98}\tabularnewline
\hline 
\textbf{Mean} & 8.1 &  & 7.34 & 0.92 & \textbf{0.92}\tabularnewline
\hline 
\textbf{Standard Deviation} & 2.8 &  & 10.7 & 0.19 & \textbf{0.24}\tabularnewline
\hline 
\textbf{Count < 1.0000} &  &  &  & 22 & \textbf{20}\tabularnewline
\hline 
\textbf{Count > 1.0000} &  &  &  & 10 & \textbf{11}\tabularnewline
\hline
\end{tabular}
\end{table}

\section{\label{sec:Conclusions}Conclusions}

The main result is that subtotal ordering is an alternate way to view
degRevLex ordering that more clearly explains its good behavior.

A secondary conclusion is that the the algorithm for computing subtotal
order is very nearly the same speed as that for computing degRevLex
order -- at least if the induced ordering rather than the weight vectors
is used to select the next S-polynomial and next reducer.

The relative speeds of the subtotal and degRevLex ordering algorithms
are close enough so that it probably isn't worth replacing the degRevLex
algorithm with the subtotal algorithm in existing implementations.
However, it is well worth considering use of the subtotal algorithm
instead of the degRevLex algorithm in new implementations.

\section*{Acknowledgments}

Thank you Daniel Lichtblau for your extensive patient help and encouragement.

\end{document}